\newenvironment{definition}[1][Definition]{\begin{trivlist} \item[\hskip \labelsep {\bfseries #1}]}{\end{trivlist}}
\newtheorem{theorem}{Theorem}
\newtheorem{proposition}{Proprosition}
\newtheorem{procedure}{Procedure}
\newtheorem{lemma}{Lemma}
\newtheorem{corollary}{Corollary}
\newcommand{\dirac}[1]{| #1 \rangle}
\begin{document}


\title{Non-Pauli Observables for CWS Codes}


\author{Douglas F.G. Santiago$^\dag$,
        Renato Portugal$^\ddag$,
        Nolmar Melo$^\ddag$ \\ \\
        \small $\dag$  Universidade Federal dos Vales do Jequitinhonha e Mucuri\\
 	\small  Diamantina, MG 39100000, Brazil \\
 	\small  douglassant@gmail.com \\
 	\small $\ddag$  Laborat\'{o}rio Nacional de Computa\c{c}\~{a}o Cient\'{\i}fica\\
 	\small      Petr\'{o}polis, RJ 25651-075, Brazil \\
  	\small      portugal@lncc.br, nolmar@lncc.br
 }





\maketitle


\begin{abstract}
It is known that nonadditive quantum codes are more optimal for error correction when compared to stabilizer codes. The class of codeword stabilized codes (CWS) provides tools to obtain new nonadditive quantum codes by reducing the problem to finding nonlinear classical codes. In this work, we establish some results on the kind of non-Pauli operators that can be used as decoding observables for CWS codes and describe a procedure to obtain these observables.
\end{abstract}

\section{Introduction}	

It is known that quantum computers are able to solve hard problems in polynomial time and to increase the speed of many algorithms~\cite{Shor1994,Grover1996,Mosca2009,Childs2010}. Decoherence problems are present in any practical implementation of quantum devices, especially in large-scale quantum computer. Quantum error correcting codes (QECCs) can be used to solve these problems by using extra qubits and storing information using redundancy~\cite{Calderbank1996,Bennett1996,Steane1996,Knill1997}.

The framework of stabilizer codes was used to obtain a large class of important quantum codes~\cite{Gottesman1996,Gottesman1997,Calderbank1997}. A code is called a stabilizer code if it is in the joint positive eigenspace of a commutative subgroup of Pauli group. In certain cases, these codes are suboptimal, because there is larger class, called nonadditive codes.

An important class of nonadditive codes, called CWS, has been studied recently~\cite{Smolin2007,Yu2008,Cross2008,Chen2008,Chuang2009}. The framework of CWS codes generalizes the stabilizer code formalism and has been used to build some good nonadditive codes, in some cases enlarging the logical space of stabilizer codes of the same length. On the one hand, many papers address codification procedures for CWS codes, and on the other hand few papers address decodification procedures. Decoding observables of specific codes are known, such as the ((9,12,3)) and ((10,24,3)) codes and associated families~\cite{Yu2008,Yu2009}. A generic decoding procedure for binary CWS codes was proposed in Ref.~\cite{Li2010} and extended for nonbinary CWS codes in Ref.~\cite{Melo2012}.

In this work, we establish a condition to the existence of non-Pauli CWS observables, that can be written in terms of the stabilizers associated to the CWS code. We describe a procedure to find these observables, which is specially useful for CWS codes that are close to stabilizer codes.


This paper is divided in the following parts. In Section~2, we review the structure of CWS codes and introduce the notations that will used in this work. In Section~3, we present the main results, in special Theorem~\ref{Principal}, its corollary and the procedure to find non-Pauli observables. In Section~4, we give an example and in Section~5, we present the conclusions.


\section{CWS codes}

An $((n,K))$ CWS code in the Hilbert space $\mathcal{H}^n$ is described by
\begin{enumerate}
\item A stabilizer group $S=\langle s_1,\ldots, s_n\rangle$, where $\{s_i\}$ is a generator set of independent and commutative Pauli operators  {(elements of Pauli group ${\mathcal G}_n$)}. This group stabilizes a single codeword $\dirac{\psi}$;
\item A set of Pauli operators $W=\{W_1,\ldots,W_K\}$. The set $\{W_j\dirac{\psi}\}$ spans the CWS code and each $W_i$ is called a  {codeword} operator.
\end{enumerate}

Cross \textit{et al.}~\cite{Cross2009} have showed that any binary CWS code is equivalent to a CWS code in a standard form, which is characterized by: (1) a graph of $n$ vertices, (2) a set of Pauli operators $s_i=X_iZ^{r_i}$, where $r_i$ is the $i$-th line of the adjacency matrix $(M)$, and (3) codeword operators $W_j=Z^{C_j}$, where $C_1=(0,\ldots,0)$, that is, $W_1=I$.

In the standard form, correctable Pauli errors can be expressed as binary strings. A Pauli error $E=Z^VX^U$, where $V,U\in \mathbb{F}_2^n$, can be mapped modulo a phase to an error $Z^{\mathrm{Cl}_S(E)}$ through function
$$ {\mathrm{Cl}_S(Z^VX^U)=V+MU \in \mathbb{F}_2^n.}$$
The problem of finding good CWS quantum codes is reduced to the problem of finding good classical codes.  {Theorem 3 of Cross \textit{et al.}~\cite{Cross2009} states that a CWS code in standard form with stabilizer $S$ spanned by $\{Z^{C_i}\dirac{\psi}\}$ detects errors in the set ${\mathcal{E}}=\{E_i\}$ if and only if the classical code $\{C_i\}$ detects errors in $\mathrm{Cl}_S({\mathcal{E}})$.} This result is valid because, for all $E_i$ satisfying $\mathrm{Cl}_S(E_i)=0$, we disregard all binary vectors $C$ such that $Z^{C}E_i=E_iZ^{C}$.

 {Our first goal is to analyze which Pauli operators can be used as observables for CWS codes. If $W$ is the set of codeword operators and $g\in {N_S(W)}$, where ${N_S(W)}$ is the normalizer of $W$ in $S$, $g$ can be used as a decoding observable. This follows from the equalities}
\begin{align*}
gE_iW_j\dirac{\psi}=m_iE_igW_j\dirac{\psi}=m_iE_iW_jg\dirac{\psi}=m_iE_iW_j\dirac{\psi},
\end{align*}
where $m_i=\pm 1$. It means that, for a fixed $E_i$ and for all $W_j$, $E_iW_j\dirac{\psi}$ lies entirely in the eigenspace associated with the  {eigenvalue} $m_i$ of $g$.  {So, there is no information leakage after the measurement of observable $g$. When a CWS code is a stabilizer code, the decoding procedure uses a generating set of ${N_S(W)}$ as observables. This is not the only choice, because we can use non-Pauli observables.}

 {Our second goal is to} establish some results on the existence and form of non-Pauli CWS observables on the group algebra $\mathbb{R}[S]$ over $\mathbb{R}$ spanned by $S$. An operator $A\in \mathbb{R}[S]$ can be written as
$$\displaystyle A=\sum_{V\in \mathbb{F}_2^n}\alpha_{V}{\cal S}^{V},$$
 {where we use the notation ${\cal S}^{V}$ as an element of $S=\langle s_1,\ldots, s_n\rangle$ given by}
$${\cal S}^{V}=s_1^{v_1} {\cdots }s_n^{v_n},$$
where $V=(v_i,\ldots,v_n)$ is a binary vector.  {We will assign a type to operator $A$ depending on the number of non-zeros coefficients $\alpha_V$. This type notion is captured in the next definition.}

\begin{definition}
A type-$i$ observable is an operator $A\in \mathbb{R}[S]$ that satisfies $A^2=I$ and is exactly a linear combination of $i$ different elements of $S$.
\end{definition}

 {
Note that this definition makes sense because group $S$ is a subset of a basis of the Hilbert space, and $\displaystyle A=\sum_{V\in \mathbb{F}_2^n}\alpha_{V}{\cal S}^{V}$ is written in a unique way.
}

Type-$1$ observables are Pauli operators.  {It is straightforward to show that there are no type-2 or type-3 observables.} In this work, we consider only type-4 observables. 

\section{Main Results}

 {If a unitary operator $A$ is an observable, then $A^2=I$. Since we are dealing with observables in $\mathbb{R}[S]$, we have the following proposition:}

\begin{proposition}
 Let $S$ be the stabilizer group of a CWS code in standard form and $\displaystyle A=\sum_{V\in \mathbb{F}_2^n}\alpha_{V}{\cal S}^{V}$ an element of $\mathbb{R}[S]$. Then, $A^2=I$ if and only if

 \begin{equation}\label{eqgeral}
      \sum_{V\in \mathbb{F}_2^n}\alpha_{V}^2=1 \textrm{{ and }}  {                    \sum_{V\in \mathbb{F}_2^n } \alpha_V \alpha_{V+U}=0,}\; \forall U \in \mathbb{F}_2^n\setminus\{0\}.
\end{equation}
\end{proposition}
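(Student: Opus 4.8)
The plan is to reduce everything to a single computation of $A^2$ in the group algebra, exploiting the fact that the assignment $V \mapsto \mathcal{S}^V$ turns multiplication in $S$ into addition in $\mathbb{F}_2^n$. First I would record the structural facts about $S$ that make this work. Because the generators $s_i$ are commutative Pauli operators and each satisfies $s_i^2 = I$ (in standard form $s_i = X_i Z^{r_i}$ with $r_i$ having a zero in position $i$, since the graph has no self-loops, so $X_i$ commutes with $Z^{r_i}$ and the square collapses), the map $V \mapsto \mathcal{S}^V$ is a group homomorphism onto $S$ with $\mathcal{S}^V \mathcal{S}^U = \mathcal{S}^{V+U}$ and $(\mathcal{S}^V)^2 = \mathcal{S}^0 = I$, all sums taken in $\mathbb{F}_2^n$. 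No phase ambiguity arises precisely because commuting order-two generators can be freely rearranged and reduced mod $2$.

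Next I would expand the square directly. Writing $A = \sum_V \alpha_V \mathcal{S}^V$, I obtain $A^2 = \sum_{V,W} \alpha_V \alpha_W \mathcal{S}^{V+W}$, and then reindex by setting $U = V + W$, so that $W = V + U$ and $\mathcal{S}^{V+W} = \mathcal{S}^U$; as $W$ ranges over $\mathbb{F}_2^n$ for fixed $V$, so does $U$, bijectively. Collecting the terms sharing a common group element yields $A^2 = \sum_U \left( \sum_V \alpha_V \alpha_{V+U} \right) \mathcal{S}^U$. The coefficient of $\mathcal{S}^U$ is thus the autocorrelation $c_U := \sum_V \alpha_V \alpha_{V+U}$ of the coefficient sequence, with $c_0 = \sum_V \alpha_V^2$.

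Finally I would invoke uniqueness of the representation in $\mathbb{R}[S]$. As noted after the definition, the elements $\mathcal{S}^V$ form part of a basis of the operator space and hence are linearly independent, so $A^2$ equals $I = \mathcal{S}^0$ if and only if its expansion carries coefficient $1$ on $\mathcal{S}^0$ and coefficient $0$ on every $\mathcal{S}^U$ with $U \neq 0$. Matching coefficients gives exactly $c_0 = \sum_V \alpha_V^2 = 1$ and $c_U = \sum_V \alpha_V \alpha_{V+U} = 0$ for all $U \in \mathbb{F}_2^n \setminus \{0\}$, which are the two conditions in the statement; both directions follow at once, since this is an equivalence between coefficient identities.

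The computation itself is routine; the only place demanding care is the first step, justifying that $\mathcal{S}^V \mathcal{S}^U = \mathcal{S}^{V+U}$ holds on the nose rather than merely up to a sign. I expect this to be the main (if modest) obstacle: one must use both the commutativity of the generators and the fact that each squares to the identity, confirming that the standard-form condition of a zero diagonal in $M$ guarantees $s_i^2 = I$ so that no stray phase survives the reindexing and corrupts the matching of coefficients.
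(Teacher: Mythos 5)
Your proposal is correct and follows essentially the same route as the paper: expand $A^2$, collect terms by the group element $\mathcal{S}^U$ via the reindexing $U = V + V'$, and match coefficients against $I$ using the uniqueness of the expansion in $\mathbb{R}[S]$. Your explicit justification that $\mathcal{S}^V\mathcal{S}^W = \mathcal{S}^{V+W}$ holds without phase (commuting generators with $s_i^2 = I$ thanks to the zero diagonal of the adjacency matrix) is a point the paper leaves implicit, but it does not change the argument.
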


\begin{proof}
Take
\[
         A^2 = \sum_{V\in \mathbb{F}_2^n }\alpha_{V}^2 I + \sum_{V\neq V'}\alpha_{V}\alpha_{V'} {\cal S}^V {\cal S}^{V'}.
\]
All terms ${\cal S}^U\in S\setminus\{I\}$ are present in the second sum, each one {as many times as} $V+V'=U$, {that is, $2^n$.} So, we can rewrite this equation as
 {
\begin{eqnarray*}
         A^2  & = & \displaystyle \sum_{V\in \mathbb{F}_2^n}\alpha_{V}^2 I + \displaystyle \sum_{U\in \mathbb{F}_2^n\setminus \{0\}}  \sum_{V+V' =U} \alpha_V \alpha_{V'} {\cal S}^U\\
         & = & \displaystyle \sum_{V\in \mathbb{F}_2^n}\alpha_{V}^2 I + \displaystyle \sum_{U\in \mathbb{F}_2^n\setminus \{0\}} {\cal S}^U \sum_{V\in \mathbb{F}_2^n} \alpha_V \alpha_{V+U}.
\end{eqnarray*}
}
Then, result (\ref{eqgeral}) follows.

\end{proof}

 {Type-4 observables can be restricted by the following theorem:}

\begin{theorem}\label{teor:type4}
\textit{A is a type-4 observable if and only if
\begin{equation}\label{eqAtype4}
    A= \pm \frac{{\cal S}^V}{2}\left( -I+{\cal S}^{V_1}+{\cal S}^{V_2}+{\cal S}^{V_1+V_2}\right)
\end{equation}
  {with $V_1\neq V_2\in \mathbb{F}_2^n\setminus \{0\}$ and $V\in \mathbb{F}_2^n$.}}
\end{theorem}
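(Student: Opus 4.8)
The plan is to specialize the two conditions of the preceding proposition to an operator $A = \sum_{i=1}^{4} a_i\,{\cal S}^{U_i}$ supported on four distinct vectors $U_1,\dots,U_4 \in \mathbb{F}_2^n$ with all $a_i \neq 0$. Normalization gives $a_1^2+a_2^2+a_3^2+a_4^2 = 1$, and the orthogonality relation $\sum_V \alpha_V\alpha_{V+D} = 0$ is nontrivial only when $D = U_i+U_j$ for some $i\neq j$, since otherwise every product in the sum has a vanishing factor. First I would note that a given $D = U_i + U_j$ receives the contribution $a_i a_j$ once from $V = U_i$ and once from $V = U_j$, so each orthogonality relation is a sum of such products over all unordered pairs whose difference equals $D$. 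The whole question thus turns on how the six pairwise sums $U_i + U_j$ (with $i<j$) can coincide.

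The crux, and the step I expect to be the main obstacle, is to prove that the support must be a coset of a two-dimensional subspace, i.e. $T := U_1+U_2+U_3+U_4 = 0$. I would organise the six sums into the three complementary pairs $\{U_1+U_2,\, U_3+U_4\}$, $\{U_1+U_3,\, U_2+U_4\}$, $\{U_1+U_4,\, U_2+U_3\}$, each pair summing to $T$. If $T \neq 0$ the two members of every pair differ, and because the $U_i$ are pairwise distinct no two sums drawn from different pairs can agree either; hence all six sums are distinct and each orthogonality relation reduces to a single equation $a_i a_j = 0$, contradicting $a_i \neq 0$. Therefore $T = 0$, the complementary pairs collapse to three distinct nonzero differences, and the surviving conditions are exactly $a_1a_2 + a_3a_4 = 0$, $a_1a_3+a_2a_4 = 0$, $a_1a_4+a_2a_3 = 0$.

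What remains is algebraic. Multiplying these three relations in pairs and cancelling nonzero factors yields $a_1^2 = a_2^2 = a_3^2 = a_4^2$, which with the normalization forces $a_i = \epsilon_i/2$ with $\epsilon_i = \pm 1$. Substituting back, each of the three relations is seen to be equivalent to $\epsilon_1\epsilon_2\epsilon_3\epsilon_4 = -1$, so exactly one or exactly three of the signs are negative. I would then reconstruct the stated form by taking $V$ to be the support vector whose coefficient has the minority sign, and $V_1, V_2$ to be its differences to two of the remaining vectors; the relation $T=0$ identifies the last vector with $V+V_1+V_2$, and the outer $\pm$ records whether the distinguished coefficient is $-\frac{1}{2}$ or $+\frac{1}{2}$.

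For the converse I would expand $A = \pm\frac{1}{2}{\cal S}^{V}\!\left(-I+{\cal S}^{V_1}+{\cal S}^{V_2}+{\cal S}^{V_1+V_2}\right)$. The four group elements ${\cal S}^{V}, {\cal S}^{V+V_1}, {\cal S}^{V+V_2}, {\cal S}^{V+V_1+V_2}$ are distinct precisely because $V_1 \neq V_2$ and $V_1,V_2 \neq 0$, so $A$ is genuinely type-$4$; and since ${\cal S}^V$ commutes with the rest and squares to $I$, a short computation of $\left(-I+{\cal S}^{V_1}+{\cal S}^{V_2}+{\cal S}^{V_1+V_2}\right)^2 = 4I$, using the Klein-four structure of $\{0,V_1,V_2,V_1+V_2\}$, gives $A^2 = I$.
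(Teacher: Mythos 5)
Your proposal is correct and follows essentially the same route as the paper: expand $A^2$, observe that the six cross terms can only cancel if the support vectors satisfy the complementary pairings $U_1+U_2=U_3+U_4$, etc.\ (your condition $T=0$), solve for the coefficients $\pm\tfrac12$ with an odd number of minus signs, and absorb the distinguished element into ${\cal S}^V$. The only difference is that you supply details the paper leaves implicit (the argument that $T\neq 0$ forces some $a_ia_j=0$, and the algebra pinning down the sign patterns), which strengthens rather than changes the argument.
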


\begin{proof}
If $A$ is given by Eq.~(\ref{eqAtype4}), then it is straightforward to verify that $A^2=I$. So, $A$ is a type-4 observable.

Reciprocally, take a type-4 observable $A=\alpha_1{\cal S}^{U_1}+\alpha_2{\cal S}^{U_2}+\alpha_3{\cal S}^{U_3}+\alpha_4{\cal S}^{U_4}$. We have
\begin{eqnarray*}
  A^2 &=& \left( \sum_{i=1}^{4}\alpha_i^2\right)I+2\alpha_1 \alpha_2 {\cal S}^{U_1+U_2}+2\alpha_1 \alpha_3 {\cal S}^{U_1+U_3}+  2\alpha_1\alpha_4 {\cal S}^{U_1+U_4}+ \\
   && 2\alpha_2 \alpha_3 {\cal S}^{U_2+U_3} +2\alpha_2\alpha_4 {\cal S}^{U_2+U_4}+2\alpha_3 \alpha_4 {\cal S}^{U_3+U_4}.
\end{eqnarray*}
 {The $\alpha$'s are not zero. So, $A^2=I$ implies that
$$\sum_{i=1}^{4}\alpha_i^2=1$$
and the sum of the remaining 6 terms is zero, which implies that $U_1+U_2=U_3+U_4$, $U_1+U_3=U_2+U_4$ and $U_1+U_4=U_2+U_3$.}

We can rewrite $A$ by taking  {$V=U_1$, $V_1=U_1+U_2$ and $V_2=U_1+U_3$, then $V_1+V_2 = U_1+U_4$ and}
\[
 A=  \frac{{\cal S}^{V}}{2}\left(\alpha_1 I+\alpha_2{\cal S}^{V_1}+\alpha_3 {\cal S}^{V_2}+\alpha_4 {\cal S}^{V_1+V_2}\right).
\]
Note that $V_1\neq V_2$ and $V_1\neq 0 \neq V_2$ because $U_i\neq U_j$, if $i\neq j$.
The solutions obeying constraints \eqref{eqgeral} belong to the set
\begin{eqnarray*}
         (\alpha_1,\alpha_2,\alpha_3,\alpha_4) \in &
	      \pm\frac 12\{ & (-1,1,1,1),(1,-1,1,1),(1,1,-1,1),(1,1,1,-1)\}.  
\end{eqnarray*}
The last three solutions can be obtained from the first one by  {collecting} ${\cal S}^{V_1}$, ${\cal S}^{V_2}$ and ${\cal S}^{V_1+V_2}$, respectively, and absorbing in ${\cal S}^V$.

\end{proof}

Let us introduce the following notation:
\begin{eqnarray}
	  {\cal S}^{(V_1,V_2)}=\frac{1}{2}\left( -I+{\cal S}^{V_1}+{\cal S}^{V_2}+{\cal S}^{V_1+V_2}\right).
\end{eqnarray}
Note that,  {for any $V_1,V_2\in \mathbb{F}_2^n$,} ${\cal S}^{(V_1,V_2)}$ stabilizes $\dirac{\psi}.$  {In the next Lemma,  we use function $F:{\mathcal G_n} \mapsto\mathbb{F}_2^n$, which depends implicitly on $V_1$ and $V_2$, and is defined by
\begin{align}\label{commut} F(G)=\left\{\begin{array}{cl}
V_1+V_2 & \textrm{if }G\textrm{ anticommute with } {\cal S}^{V_1} \textrm{ and } {\cal S}^{V_2};   \\
V_1 & \textrm{if } G \textrm{ anticommute only with }{\cal S}^{V_2};\\
V_2 & \textrm{if } G \textrm{ anticommute only with }{\cal S}^{V_1};\\
 0   & \textrm{otherwise. }
\end{array}\right.\end{align}
}

\begin{lemma}\label{lema1}
Let $G$ be a Pauli operator. If $G$ does not commute with ${\cal S}^{V_1}$ or ${\cal S}^{V_2}$,  then ${\cal S}^{(V_1,V_2)}G=-{\cal S}^{F(G)}G\,{\cal S}^{(V_1,V_2)}=-G\,{\cal S}^{F(G)}{\cal S}^{(V_1,V_2)}=-G\,{\cal S}^{(V_1,V0_2)}{\cal S}^{F(G)}$.
 \end{lemma}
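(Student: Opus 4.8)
The plan is to prove the three claimed equalities by directly commuting the Pauli operator $G$ through each term of ${\cal S}^{(V_1,V_2)}$ and tracking the resulting signs, which are governed by whether $G$ commutes or anticommutes with ${\cal S}^{V_1}$ and ${\cal S}^{V_2}$. Since any two Pauli operators either commute or anticommute, for each term ${\cal S}^{U}$ in the definition ${\cal S}^{(V_1,V_2)}=\tfrac12(-I+{\cal S}^{V_1}+{\cal S}^{V_2}+{\cal S}^{V_1+V_2})$ we have ${\cal S}^{U}G=(-1)^{\epsilon(U)}G\,{\cal S}^{U}$, where $\epsilon(U)\in\{0,1\}$ records the anticommutation. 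The key observation is that the sign pattern $(\epsilon(0),\epsilon(V_1),\epsilon(V_2),\epsilon(V_1+V_2))$ across the four terms, combined with the definition \eqref{commut} of $F(G)$, is engineered so that multiplying ${\cal S}^{(V_1,V_2)}$ on the left by ${\cal S}^{F(G)}$ exactly compensates the flipped signs.

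First I would treat the case that $G$ anticommutes with both ${\cal S}^{V_1}$ and ${\cal S}^{V_2}$, so $F(G)=V_1+V_2$. Here $G$ anticommutes with ${\cal S}^{V_1}$ and ${\cal S}^{V_2}$ individually but commutes with $I$ and with ${\cal S}^{V_1+V_2}$ (the two anticommuting signs cancel). Pushing $G$ to the left of ${\cal S}^{(V_1,V_2)}$ therefore produces $\tfrac12(-I-{\cal S}^{V_1}-{\cal S}^{V_2}+{\cal S}^{V_1+V_2})G$, and I would check that this equals $-{\cal S}^{V_1+V_2}{\cal S}^{(V_1,V_2)}G$ by direct multiplication, using ${\cal S}^{V_1+V_2}{\cal S}^{U}={\cal S}^{V_1+V_2+U}$. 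The remaining two cases, where $G$ anticommutes with exactly one of the two generators, are handled identically: the sign pattern then matches multiplication by ${\cal S}^{V_1}$ or ${\cal S}^{V_2}$ respectively, which is precisely what $F(G)$ selects. In every case I would verify the single algebraic identity ${\cal S}^{F(G)}{\cal S}^{(V_1,V_2)}$ reproduces the sign-flipped combination obtained by commuting $G$ through.

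To obtain the second and third expressions in the chain, I would then commute ${\cal S}^{F(G)}$ past $G$. Since $F(G)$ is built from $V_1$ and $V_2$, and $G$ anticommutes with those ${\cal S}^{V_i}$ that define $F(G)$, one checks that ${\cal S}^{F(G)}G=G\,{\cal S}^{F(G)}$ holds in each branch of \eqref{commut}—the number of anticommutations contributing to ${\cal S}^{F(G)}$ is always even—which yields $-G\,{\cal S}^{F(G)}{\cal S}^{(V_1,V_2)}$. Finally, the equality ${\cal S}^{F(G)}{\cal S}^{(V_1,V_2)}={\cal S}^{(V_1,V_2)}{\cal S}^{F(G)}$ follows because each generator ${\cal S}^{V_i}$ commutes with every element of $S$ (the stabilizer group is abelian), giving the last expression $-G\,{\cal S}^{(V_1,V_2)}{\cal S}^{F(G)}$.

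The main obstacle is purely bookkeeping: verifying that in each of the three branches the induced sign on each of the four terms of ${\cal S}^{(V_1,V_2)}$ agrees with the sign produced by left-multiplication by ${\cal S}^{F(G)}$. This reduces to the elementary fact that for Pauli operators the anticommutation sign is additive, i.e. $\epsilon(U+U')=\epsilon(U)+\epsilon(U')\pmod 2$, so that $\epsilon(V_1+V_2)=\epsilon(V_1)+\epsilon(V_2)$; once this is invoked, the case analysis dictated by \eqref{commut} is routine and the commutativity of $S$ handles the rearrangement of factors in the second and third equalities.
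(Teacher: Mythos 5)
Your proof is correct and coincides with what the paper intends: the paper's own proof of Lemma~\ref{lema1} is only the sentence ``The verification is straightforward,'' and your three-case sign-tracking argument, the compensation identity showing the sign-flipped combination equals $-{\cal S}^{F(G)}{\cal S}^{(V_1,V_2)}$, the evenness observation giving ${\cal S}^{F(G)}G=G\,{\cal S}^{F(G)}$, and the commutativity of $S$ supply exactly the omitted verification. One minor slip to fix when writing it up: commuting $G$ leftward through ${\cal S}^{(V_1,V_2)}$ yields $G$ \emph{followed by} the sign-flipped combination (not the combination followed by $G$, since the two do not commute), but this only changes which of the lemma's four expressions you reach first, because your other two commutation facts connect them all.
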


\begin{proof}
 {The verification is straightforward.}
\end{proof}

The conditions to use an operator $A$ as a CWS observable is closely related to the conditions that guarantees that $A$ stabilizes the code.

\begin{proposition}\label{auxPrincipal}
 Let $C_i=(c_i^1,\ldots,c_i^n)$, $i=1,\ldots, K$ be the classical codewords of a CWS code in standard form. Let $V_1$, $V_2$, $V\in \mathbb{F}_2^n$ and $p_{i}=\langle C_i,V_1\rangle \vee \langle C_i,V_2\rangle$. Then, a type-4 observable $A$ stabilizes the code if and only if $A={\cal S}^{V}{\cal S}^{(V_1,V_2)}$ and $\langle C_i,V\rangle=p_i$ for all $i$.
\end{proposition}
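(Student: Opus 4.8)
The plan is to reduce the statement ``$A$ stabilizes the code'' to the family of scalar conditions $AW_j\dirac{\psi}=W_j\dirac{\psi}$, $j=1,\ldots,K$ (these vectors span the code), and to compute the eigenvalue of $A$ on each $W_j\dirac{\psi}$ in closed form. I will lean on three facts: (i) every ${\cal S}^U$ stabilizes $\dirac{\psi}$, and so does ${\cal S}^{(V_1,V_2)}$, as noted after Theorem~\ref{teor:type4}; (ii) the Pauli operator $W_j=Z^{C_j}$ commutes with ${\cal S}^U$ exactly when $\langle C_j,U\rangle=0$ and anticommutes when $\langle C_j,U\rangle=1$ --- this follows from the symplectic commutation rule after writing ${\cal S}^U$ up to a phase as $X^U Z^{MU}$, so that the relevant symplectic product collapses to $\langle C_j,U\rangle$; and (iii) Lemma~\ref{lema1}, which supplies the extra sign whenever $W_j$ fails to commute with ${\cal S}^{V_1}$ or ${\cal S}^{V_2}$.

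First I would establish the single identity $AW_j\dirac{\psi}=(-1)^{p_j+\langle C_j,V\rangle}W_j\dirac{\psi}$ for $A={\cal S}^V{\cal S}^{(V_1,V_2)}$, by moving ${\cal S}^{(V_1,V_2)}$ past $W_j$ onto $\dirac{\psi}$. If $p_j=0$ then $\langle C_j,V_1\rangle=\langle C_j,V_2\rangle=0$, so $W_j$ commutes with ${\cal S}^{V_1},{\cal S}^{V_2}$ and hence with ${\cal S}^{(V_1,V_2)}$, giving $AW_j\dirac{\psi}={\cal S}^V W_j\dirac{\psi}$. If $p_j=1$ then $W_j$ anticommutes with at least one of ${\cal S}^{V_1},{\cal S}^{V_2}$, so Lemma~\ref{lema1} yields ${\cal S}^{(V_1,V_2)}W_j=-W_j{\cal S}^{(V_1,V_2)}{\cal S}^{F(W_j)}$; since both ${\cal S}^{F(W_j)}$ and ${\cal S}^{(V_1,V_2)}$ fix $\dirac{\psi}$, this collapses to $AW_j\dirac{\psi}=-{\cal S}^V W_j\dirac{\psi}$. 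In either case the residual factor ${\cal S}^V W_j\dirac{\psi}=(-1)^{\langle C_j,V\rangle}W_j\dirac{\psi}$ comes from commuting ${\cal S}^V$ through $W_j$ and using ${\cal S}^V\dirac{\psi}=\dirac{\psi}$. Collecting the two signs produces the claimed eigenvalue $(-1)^{p_j+\langle C_j,V\rangle}$.

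With this identity the equivalence is immediate for the form $A={\cal S}^V{\cal S}^{(V_1,V_2)}$: since each $W_j\dirac{\psi}\neq 0$, we get $AW_j\dirac{\psi}=W_j\dirac{\psi}$ iff $p_j+\langle C_j,V\rangle\equiv 0$, i.e. $\langle C_j,V\rangle=p_j$, and quantifying over $j$ delivers the backward implication together with most of the forward one. The one remaining point in the forward direction is to rule out the overall minus sign, since Theorem~\ref{teor:type4} only guarantees a type-4 observable to have the form $\pm{\cal S}^V{\cal S}^{(V_1,V_2)}$. Here I would use $W_1=I$ (because $C_1=0$): evaluating on $\dirac{\psi}$ gives $A\dirac{\psi}=\pm{\cal S}^V{\cal S}^{(V_1,V_2)}\dirac{\psi}=\pm\dirac{\psi}$, and stabilization forces the $+$ sign. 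Note that $C_1=0$ also makes the condition $\langle C_1,V\rangle=p_1$ hold automatically ($0=0\vee0$), so consistency is preserved.

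I expect the main obstacle to be the bookkeeping of signs rather than any conceptual difficulty: getting the orientation of Lemma~\ref{lema1} right, so that the ${\cal S}^{F(W_j)}$ and ${\cal S}^{(V_1,V_2)}$ factors land on $\dirac{\psi}$ and vanish, and correctly matching the two commutation cases to the two values of the Boolean $p_j=\langle C_j,V_1\rangle\vee\langle C_j,V_2\rangle$ so that they fuse into the single exponent $p_j+\langle C_j,V\rangle$. The sign-fixing step via $W_1=I$ is the subtle part of the forward direction, since without it the characterization would spuriously also admit $A=-{\cal S}^V{\cal S}^{(V_1,V_2)}$ under the complementary condition $\langle C_i,V\rangle=\overline{p_i}$ --- a possibility that $W_1=I$ eliminates.
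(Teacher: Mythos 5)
Your proposal is correct and follows essentially the same route as the paper's proof: invoke Theorem~\ref{teor:type4} to get $A=\pm{\cal S}^V{\cal S}^{(V_1,V_2)}$, fix the sign by evaluating on $\dirac{\psi}=W_1\dirac{\psi}$, and then split into the same two cases (whether $Z^{C_i}$ commutes with both ${\cal S}^{V_1}$ and ${\cal S}^{V_2}$ or not) using Lemma~\ref{lema1} and the commutation rule $Z^{C_i}{\cal S}^{V}=(-1)^{\langle C_i,V\rangle}{\cal S}^{V}Z^{C_i}$. Your only departure is organizational --- condensing both directions into the single eigenvalue identity $AW_j\dirac{\psi}=(-1)^{p_j+\langle C_j,V\rangle}W_j\dirac{\psi}$, with the sign-fixing via $W_1=I$ made explicit where the paper leaves it implicit --- which is a clean but equivalent packaging of the same argument.
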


\begin{proof}
Suppose that $A={\cal S}^{V}{\cal S}^{(V_1,V_2)}$ and $\langle C_i,V\rangle=p_i$ is true for all $i$. Then,
\begin{enumerate}
\item if $Z^{C_i}$ commutes with ${\cal S}^{V_1}$ and ${\cal S}^{V_2}$, then $Z^{C_i}$ also commutes with ${\cal S}^{V}$ and ${\cal S}^{(V_1,V_2)}$, that is,
$${\cal S}^{V}{\cal S}^{(V_1,V_2)}Z^{C_i}\dirac{\psi}=Z^{C_i}\dirac{\psi},$$
\item if $Z^{C_i}$ does not commute with ${\cal S}^{V_1}$ or ${\cal S}^{V_2}$, then $Z^{C_i}$ anticommutes with ${\cal S}^{V}$.  {Besides, Lemma~\ref{lema1} implies that}
\begin{align*}&{\cal S}^{V}{\cal S}^{(V_1,V_2)}Z^{C_i}\dirac{\psi}={\cal S}^{V}(-{\cal S}^{ F(Z^{C_i})})Z^{C_i}{\cal S}^{(V_1,V_2)}\dirac{\psi}=-{\cal S}^{V}Z^{C_i}{\cal S}^{ F(Z^{C_i})}\dirac{\psi}=\\
&-{\cal S}^{V}Z^{C_i}\dirac{\psi}=Z^{C_i}{\cal S}^{V}\dirac{\psi}=Z^{C_i}\dirac{\psi}.\end{align*}
\end{enumerate}
 {In all cases, $A$ stabilizes the code.}

Reciprocally, let \(A\) be a type-4 observable. By Theorem~\ref{teor:type4}, we have \(A=\pm {\cal{S}}^V{\cal{S}}^{\{V_1,V_2\}}\). Then, \(A\dirac{\psi} = \pm {\cal{S}}^V{\cal{S}}^{\{V_1,V_2\}}\dirac{\psi} = \pm \dirac{\psi} \). By supposition, \(A\) stabilizes the code. Therefore, \(A= {\cal{S}}^V{\cal{S}}^{\{V_1,V_2\}}\). Besides, to stabilize the code, we have:

\begin{enumerate}
\item If  {a codeword operator $W_i=Z^{C_i}$} commutes with ${\cal S}^{V_1}$ and ${\cal S}^{V_2}$, we have
$${\cal S}^{V}{\cal S}^{(V_1,V_2)}Z^{C_i}\dirac{\psi}={\cal S}^{V}Z^{C_i}{\cal S}^{(V_1,V_2)}\dirac{\psi}={\cal S}^{V}Z^{C_i}\dirac{\psi}=Z^{C_i}\dirac{\psi}.$$
 {The last equality implies that ${\cal S}^{V}$ commutes with $Z^{C_i}$. So,} $\langle C_i,V\rangle=0$.
\item If $W_i=Z^{C_i}$ does not commute with ${\cal S}^{V_1}$ or ${\cal S}^{V_2}$, the Lemma~\ref{lema1} implies that
\begin{align*}&{\cal S}^{V}{\cal S}^{(V_1,V_2)}Z^{C_i}\dirac{\psi}={\cal S}^{V}(-{\cal S}^{ F(Z^{C_i})})Z^{C_i}{\cal S}^{(V_1,V_2)}\dirac{\psi}=-{\cal S}^{V}Z^{C_i}{\cal S}^{ F(Z^{C_i})}\dirac{\psi}\\
&=-{\cal S}^{V}Z^{C_i}\dirac{\psi}=Z^{C_i}\dirac{\psi}.
\end{align*}
 {The last equality implies that ${\cal S}^{V}$ anticommutes with $Z^{C_i}$. So,} $\langle C_i,V\rangle=1$.
\end{enumerate}
 {These results show that $\langle C_i,V\rangle=p_i$ is true for all $i$.}
\end{proof}

Taking $V=(v_1,\ldots, v_n)\in \mathbb{F}_2^n$ and  $p_{i}=\langle C_i,V_1\rangle \vee \langle C_i,V_2\rangle$, equations $\langle C_i,V\rangle=p_i$ can be put in matrix form
\begin{align}\label{slr}
C
\left[\begin{array}{c}
         v_1\\\vdots\\v_n
        \end{array}
\right]
=
\left[\begin{array}{c}
         p_1\\\vdots\\p_k
        \end{array}
\right],
\end{align}
where $C$ is the matrix of all classical codewords
\begin{equation}\label{matrixC}
  C=\left[\begin{array}{ccc}
         c_1^1&\ldots &c_1^n\\\vdots&\vdots&\vdots\\ c_K^1& \ldots &c_K^n
        \end{array}
\right].
\end{equation}

An operator $A$ can be used as a CWS observable in the decoding procedure, if the encoded information is not lost after the measurement of $A$. We have to guarantee that, for each $i$ and for all~$j$, $E_iW_j\dirac{\psi}$ belongs to the eigenspace of $E_iW_j$ associated with the eigenvalues 1 or -1, that is, $$AE_iW_j\dirac{\psi}=E_iW_j\dirac{\psi},\;\forall j$$
or
$$AE_iW_j\dirac{\psi}=-E_iW_j\dirac{\psi},\;\forall j.$$
Those facts lead us to the following theorem:

\begin{theorem}\label{Principal}
Let $\mathcal{E}=\{E_i\}_{i=1}^{T}$ be a set of correctable Pauli errors of a CWS code in standard form. Then, a type-4 observable $A= {\cal S}^{V}{\cal S}^{(V_1,V_2)}$ can be used as  {a decoding} observable if and only if  for all $i\in\{1,\ldots,T\}$ there is $V'_i$ solution of Eq.~\eqref{slr} with $V=V'_i+F(E_i)$.
\end{theorem}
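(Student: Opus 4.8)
The plan is to compute the action of $A={\cal S}^{V}{\cal S}^{(V_1,V_2)}$ on each basis vector $E_iW_j\dirac{\psi}$ explicitly, obtaining a scalar $m_{ij}\in\{+1,-1\}$ with $AE_iW_j\dirac{\psi}=m_{ij}E_iW_j\dirac{\psi}$, and then to read off the decoding requirement stated just before the theorem as the condition that, for each fixed $i$, the sign $m_{ij}$ is independent of $j$. The starting point is a reduction of every vector $E_iW_j\dirac{\psi}$ to pure $Z$-type form. Writing $E_i$ as a Pauli with $X$-part $b_i$ and using that $s_k=X_kZ^{r_k}$ stabilizes $\dirac{\psi}$ (so $X^{b_i}\dirac{\psi}=(\text{phase})\,Z^{Mb_i}\dirac{\psi}$), I would obtain
\[
E_iW_j\dirac{\psi}=\nu_{ij}\,Z^{D_i+C_j}\dirac{\psi},\qquad D_i=\mathrm{Cl}_S(E_i),
\]
for some phase $\nu_{ij}$. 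This phase will cancel at the end, so only $D_i+C_j$ matters.

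Next I would feed $Z^{D_i+C_j}\dirac{\psi}$ through $A$. Applying Lemma~\ref{lema1} to $G=Z^{D_i+C_j}$ moves ${\cal S}^{(V_1,V_2)}$ to the right of $Z^{D_i+C_j}$ at the cost of a factor $-1$ precisely when $Z^{D_i+C_j}$ anticommutes with ${\cal S}^{V_1}$ or ${\cal S}^{V_2}$, together with an extra stabilizer ${\cal S}^{F(Z^{D_i+C_j})}$; both ${\cal S}^{(V_1,V_2)}$ and ${\cal S}^{F(\cdots)}$ then act as the identity on $\dirac{\psi}$. Commuting the remaining ${\cal S}^{V}$ (whose $X$-part is $V$) past $Z^{D_i+C_j}$ produces the sign $(-1)^{\langle V,D_i+C_j\rangle}$, and ${\cal S}^{V}$ also stabilizes $\dirac{\psi}$. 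Collecting the signs and converting $Z^{D_i+C_j}\dirac{\psi}$ back into $E_iW_j\dirac{\psi}$, I expect
\[
AE_iW_j\dirac{\psi}=(-1)^{\,q_{ij}+\langle V,\,D_i+C_j\rangle}\,E_iW_j\dirac{\psi},\qquad q_{ij}=\langle V_1,D_i+C_j\rangle\vee\langle V_2,D_i+C_j\rangle,
\]
since the anticommutation of a pure $Z$ operator $Z^{D}$ with ${\cal S}^{V_1}$ (resp. ${\cal S}^{V_2}$) is governed by $\langle V_1,D\rangle$ (resp. $\langle V_2,D\rangle$). This identifies $m_{ij}=(-1)^{q_{ij}+\langle V,D_i+C_j\rangle}$.

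I would then turn ``independence of $j$'' into the stated linear system. Because $W_1=I$, i.e. $C_1=0$, the requirement $m_{ij}=m_{i1}$ for all $j$ reads, after cancelling the $j$-independent factor $(-1)^{\langle V,D_i\rangle}$,
\[
q_{ij}+q_{i1}=\langle V,C_j\rangle \pmod 2\qquad\forall j .
\]
The remaining work is a Boolean identity. A case check of the definition of $F$ gives $F(E_i)=\langle D_i,V_2\rangle V_1+\langle D_i,V_1\rangle V_2$, whence $\langle C_j,F(E_i)\rangle=\langle D_i,V_2\rangle\langle C_j,V_1\rangle+\langle D_i,V_1\rangle\langle C_j,V_2\rangle$; using $u\vee v=u+v+uv$ over $\mathbb{F}_2$ I would verify $q_{ij}+q_{i1}=p_j+\langle C_j,F(E_i)\rangle$, with $p_j=\langle C_j,V_1\rangle\vee\langle C_j,V_2\rangle$. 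Substituting this turns the displayed condition into $\langle C_j,V+F(E_i)\rangle=p_j$ for all $j$, which by Proposition~\ref{auxPrincipal} is exactly the statement that $V'_i=V+F(E_i)$ solves Eq.~\eqref{slr}. Running this for every $i$ yields the claimed equivalence.

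The main obstacle is the nonlinearity introduced by the disjunction $\vee$ appearing in both $q_{ij}$ and $p_j$: the sign $m_{ij}$ is not a linear function of $j$, so one cannot directly extract a linear system. The crux is the identity $q_{ij}+q_{i1}=p_j+\langle C_j,F(E_i)\rangle$, whose verification (expanding the ORs and checking that the quadratic terms reorganize into the bilinear expression coming from $F(E_i)$) is where the content lies. Getting the four-case form of $F(E_i)$ right, and confirming that the unwanted phases $\nu_{ij}$ genuinely cancel, are the two places I would be most careful.
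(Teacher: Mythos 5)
Your proof is correct, but it follows a genuinely different route from the paper's. The paper keeps $E_i$ as an operator: it writes ${\cal S}^VE_i=m_iE_i{\cal S}^V$ with an abstract sign $m_i$, applies Lemma~\ref{lema1} to $G=E_i$ to obtain $AE_iW_j\dirac{\psi}=\pm m_iE_i{\cal S}^{V+F(E_i)}{\cal S}^{(V_1,V_2)}W_j\dirac{\psi}$, and then delegates the entire $j$-dependence to Proposition~\ref{auxPrincipal} (``${\cal S}^{V'}{\cal S}^{(V_1,V_2)}$ stabilizes the code iff $V'$ solves Eq.~\eqref{slr}''), arguing the two implications separately. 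You instead flatten everything to $\mathbb{F}_2$: you reduce $E_iW_j\dirac{\psi}$ to $Z^{D_i+C_j}\dirac{\psi}$ via $\mathrm{Cl}_S$, derive the explicit eigenvalue $m_{ij}=(-1)^{q_{ij}+\langle V,D_i+C_j\rangle}$, and replace the appeal to Proposition~\ref{auxPrincipal} by the Boolean identity $q_{ij}+q_{i1}=p_j+\langle C_j,F(E_i)\rangle$ --- which does check out: setting $a=\langle V_1,D_i\rangle$, $b=\langle V_2,D_i\rangle$, $c=\langle V_1,C_j\rangle$, $d=\langle V_2,C_j\rangle$, both sides expand to $c+d+cd+ad+bc$. (Indeed, once you have $\langle C_j,V+F(E_i)\rangle=p_j$ for all $j$, that \emph{is} Eq.~\eqref{slr}; your citation of Proposition~\ref{auxPrincipal} at that point is cosmetic, since your computation re-derives its content inline.) Your version buys an explicit sign formula telling you which eigenspace each corrupted codeword lands in, and a single chain of equivalences covering both directions at once; the paper's version buys brevity by reusing a proposition already proved and avoiding OR-algebra. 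Two details you flagged deserve to be nailed down: (i) the phases $\nu_{ij}$ do cancel, since $A(\nu_{ij}Z^{D_i+C_j}\dirac{\psi})=\nu_{ij}m_{ij}Z^{D_i+C_j}\dirac{\psi}=m_{ij}E_iW_j\dirac{\psi}$; (ii) your formula $F(E_i)=\langle D_i,V_2\rangle V_1+\langle D_i,V_1\rangle V_2$ tacitly identifies the commutation behavior of $E_i$ (which is what defines $F$ in the theorem) with that of $Z^{D_i}$ --- this is true, but only because the adjacency matrix $M$ is symmetric: for $E_i=Z^vX^u$ and ${\cal S}^{V_k}\propto X^{V_k}Z^{MV_k}$, the commutator sign is $\langle v,V_k\rangle+\langle u,MV_k\rangle=\langle v+Mu,V_k\rangle=\langle D_i,V_k\rangle$. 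With those two points spelled out, your argument is complete.
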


\begin{proof}
Suppose that $A={\cal S}^{V}{\cal S}^{(V_1,V_2)}$ satisfies $V=V_i+F(E_i)$, for all $i$, where $V_i$ is a solution of Eq.~\eqref{slr}. Let ${\cal S}^{V}E_i=m_iE_i{\cal S}^{V}$, where $m_i=\pm 1$. Then,  {by Lemma \ref{lema1} we have}
\begin{enumerate}
\item if $E_i$ commutes with ${\cal S}^{V_1}$ and ${\cal S}^{V_2}$, then $F(E_i)=(0,\ldots, 0)$~(\ref{commut}) and
 {
\begin{eqnarray*}
	AE_iW_j\dirac{\psi}& = &  {\cal S}^{V}{\cal S}^{(V_1,V_2)}E_iW_j\dirac{\psi}= {\cal S}^{V}E_i{\cal S}^{(V_1,V_2)}W_j\dirac{\psi}\\
	& = & m_iE_i{\cal S}^{V}{\cal S}^{(V_1,V_2)}W_j\dirac{\psi}= m_iE_i{\cal S}^{V+F(E_i)}{\cal S}^{(V_1,V_2)}W_j\dirac{\psi}\\
	& = & m_iE_i{\cal S}^{V'_i}{\cal S}^{(V_1,V_2)}W_j\dirac{\psi}= m_iE_iW_j\dirac{\psi}.
\end{eqnarray*}}
The last equality holds because  {${\cal S}^{V'_i}{\cal S}^{(V_1,V_2)}$} stabilizes the code.

\item If $E_i$  {does} not commute with ${\cal S}^{V_1}$ or ${\cal S}^{V_2}$, then
 {
\begin{eqnarray*}
	AE_iW_j\dirac{\psi} & = &  {\cal S}^{V}{\cal S}^{(V_1,V_2)}E_iW_j\dirac{\psi}= -  {\cal S}^{V+F(E_i)}E_i{\cal S}^{(V_1,V_2)}W_j\dirac{\psi}\\
	& = & -  m_iE_i{\cal S}^{V'_i}{\cal S}^{(V_1,V_2)}W_j\dirac{\psi} = -  m_iE_iW_j\dirac{\psi}.
\end{eqnarray*}}
Again, we have used that  {${\cal S}^{V'_i}{\cal S}^{(V_1,V_2)}$} stabilizes the code.
\end{enumerate}

Reciprocally, suppose that $A= {\cal S}^{V}{\cal S}^{(V_1,V_2)}$ can be used as a decoding CWS observable. Using ${\cal S}^{V}E_i=m_iE_i{\cal S}^{V}$, where $m_i=\pm 1$, and repeating the commuting process, we have
\begin{enumerate}
\item if $E_i$ commutes with both ${\cal S}^{V_1}$ and ${\cal S}^{V_2}$, then
\begin{align*}
&AE_iW_j\dirac{\psi}= {\cal S}^{V}{\cal S}^{(V_1,V_2)}E_iW_j\dirac{\psi}=
m_iE_i{\cal S}^{V+F(E_i)}{\cal S}^{(V_1,V_2)}W_j\dirac{\psi}
\end{align*}
where $F(E_i)=(0,\ldots,0)$.
\item If $E_i$  {does} not commute with ${\cal S}^{V_1}$ or ${\cal S}^{V_2}$, then
\begin{align*}
&AE_iW_j\dirac{\psi}= {\cal S}^{V}{\cal S}^{(V_1,V_2)}E_iW_j\dirac{\psi} = -  m_i
E_i{\cal S}^{V+F(E_i)}{\cal S}^{(V_1,V_2)}W_j\dirac{\psi}.
\end{align*}
\end{enumerate}
We are assuming that $A$ can be used as a decoding CWS observable. In both cases, we have
\begin{align*}
AE_iW_j\dirac{\psi}=E_iW_j\dirac{\psi}, \;\forall j
\end{align*}
or
\begin{align*}
AE_iW_j\dirac{\psi}=-E_iW_j\dirac{\psi}, \;\forall j.
\end{align*}
This implies that ${\cal S}^{V+F(E_i)}{\cal S}^{(V_1,V_2)}$ stabilizes the code for all $i$, and by Prop.~\ref{auxPrincipal} there is a solution  {$V'_i$} of Eq.~\eqref{slr} such that  {$V+F(E_i)=V'_i$ for all $i$.}

\end{proof}

Theo.~\ref{Principal} allows us to make an exhaustive search for type-4  {decoding} observables using expression $A= {\cal S}^{V}{\cal S}^{(V_1,V_2)}$. We have to consider all pairs $({\cal S}^{V_1},{\cal S}^{V_2})$ in $S$ such that $V_1\neq V_2$ and look for solutions of Eq.~(\ref{slr}) for each pair.  {This process can be expensive. Next corollary addresses a more efficient way to search the decoding observables by restricting the search space to ${N_S({\mathcal{E}})}$. In this case, some solutions may be lost.}

\begin{corollary}\label{corollary1}
Let $\mathcal{E}=\{E_i\}_{i=1}^{T}$ be a set of correctable errors of a CWS code in standard form and ${N_S({\mathcal{E}})}$ the normalizer of $\mathcal{E}$ in $S$. If $A= {\cal S}^{V}{\cal S}^{(V_1,V_2)}$ is a type-4 observable, where ${\cal S}^{V_1},{\cal S}^{V_2}\in {N_S({\mathcal{E}})}$ and $V$ is a solution of Eq.~\eqref{slr}, then $A$ is a  {decoding observable for the CWS code.}
\end{corollary}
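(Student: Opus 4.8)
The plan is to obtain this as a direct specialization of Theorem~\ref{Principal}, using the fact that putting ${\cal S}^{V_1}$ and ${\cal S}^{V_2}$ in the normalizer forces the shift function $F$ to vanish on every correctable error. First I would recall the convention, already used for $N_S(W)$ earlier in the text, that an element of $S$ normalizes a Pauli operator exactly when it commutes with it; hence $N_S(\mathcal{E})$ consists of those ${\cal S}^{U}\in S$ that commute with every $E_i$. The hypothesis ${\cal S}^{V_1},{\cal S}^{V_2}\in N_S(\mathcal{E})$ therefore says precisely that each $E_i$ commutes with both ${\cal S}^{V_1}$ and ${\cal S}^{V_2}$.

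The key step is to read this off against definition~\eqref{commut} of $F$. Taking $G=E_i$, the absence of any anticommutation with ${\cal S}^{V_1}$ or ${\cal S}^{V_2}$ places us in the last branch of~\eqref{commut}, so that $F(E_i)=(0,\ldots,0)$ for every $i\in\{1,\ldots,T\}$.

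It then remains to check the criterion of Theorem~\ref{Principal}: for each $i$ there should exist a solution $V'_i$ of Eq.~\eqref{slr} with $V=V'_i+F(E_i)$. Since $F(E_i)=0$, this collapses to the requirement that $V$ itself solve Eq.~\eqref{slr}, which is exactly the hypothesis. I would therefore set $V'_i=V$ for all $i$, so that $V=V'_i+F(E_i)$ holds identically, and invoke Theorem~\ref{Principal} to conclude that $A={\cal S}^{V}{\cal S}^{(V_1,V_2)}$ is a decoding observable. I do not anticipate a real obstacle, as the whole argument is the $F\equiv 0$ case of the preceding theorem; the only subtle point to articulate is that this sufficient condition is not necessary, since forcing $V_1,V_2$ into the normalizer kills the coupling $F(E_i)$ between $V$ and the errors but simultaneously discards observables whose $V_1$ or $V_2$ anticommutes with some $E_i$ --- precisely the lost solutions mentioned in the remark before the statement.
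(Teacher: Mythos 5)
Your proposal is correct and follows exactly the paper's own argument: membership of ${\cal S}^{V_1},{\cal S}^{V_2}$ in $N_S(\mathcal{E})$ forces $F(E_i)=(0,\ldots,0)$ for every $i$, so the condition of Theorem~\ref{Principal} reduces to $V$ itself solving Eq.~\eqref{slr}, which is the hypothesis. Your additional remarks (explicitly setting $V'_i=V$ and noting that the condition is sufficient but not necessary) only make explicit what the paper leaves implicit.
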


\begin{proof}
If both ${\cal S}^{V_1}$ and ${\cal S}^{V_2}$ are in ${N_S({\mathcal{E}})}$, then $F(E_i)=(0,\ldots,0)$ for all $i$, and Theo.~\ref{Principal} implies that $V=V_i$, where $V_i$ is a solution of Eq.~\eqref{slr}.

\end{proof}

Corollary~\ref{corollary1} helps us to build a procedure to find type-4 decoding observables, which we describe now.

\begin{procedure}\label{procedure1}
Let $\mathcal{E}=\{E_i\}$ be the set of correctable errors and $W=\{W_j\}$ the set of codeword operators.
\begin{enumerate}
\item Find independent generators of ${N_S(W)}$.
\item Measure the generators. For each sequence of measurement results, there is set ${\mathcal{E}'}$, subset of ${\mathcal{E}}$, of errors that were not detected by the measurements.
\item For each ${\mathcal{E}'}$ do
\begin{enumerate}
  \item Find all elements in group ${N_S({\mathcal{E}'})}$.
  \item Take pairs $({\cal S}^{V_1}, {\cal S}^{V_2})$ in ${N_S({\mathcal{E}'})}$ such that ${V_1}\neq {V_2}$  until finding a solution $V$ of Eq.~(\ref{slr}) that distinguishes some errors in ${\mathcal{E'}}$. This step may split ${\mathcal{E}'}$ into smaller subsets.
  \item Repeat Step (a) and (b) with smaller subsets as many times as needed until  {distinguishing Pauli errors in ${\mathcal{E}'}$.}
\end{enumerate}
\end{enumerate}
\end{procedure}

To find generators of ${N_S(W)}$ in Step~1, we employ the commuting relations
\begin{equation}\label{}
  Z^{C_i} \mathcal{S}^{O_j} = (-1)^{\langle C_i, O_j \rangle}   \mathcal{S}^{O_j} Z^{C_i}
\end{equation}
to show that $\mathcal{S}^{O_j}\in {N_S(W)}$ if and only if $\langle C_i, O_j \rangle=0$ for all $i$. This implies that $O_j$ must be in the kernel of matrix $C$, described in Eq.~\eqref{matrixC}. The independent generators for ${N_S(W)}$ are obtained from a basis of the kernel of $C$.

To find all elements in ${N_S(\mathcal{E}')}$ in Step~3(a), we convert the errors in $\mathcal{E}'$ to classical words by using function Cl$_S$ and build a new matrix. The kernel of this matrix is in one-to-one correspondence to the elements of ${N_S({\mathcal{E}'})}$. Each pair $({\cal S}^{V_1}, {\cal S}^{V_2})$ and a solution $V$ of Eq.~(\ref{slr}) provides a non-Pauli observable for errors in ${\mathcal{E'}}$. Step~3 can be improved by testing whether each non-Pauli observable can be used for other sets  ${\mathcal{E'}}$ generated is Step~2.

\section{Example}

In this section,  we employ Procedure~\ref{procedure1} to find the decoding observables for the $((10,20,3))$ code, described by Cross \textit{et al.}~\cite{Cross2009}. This code is based on the double ring graph, with the following generators:
$$\begin{array}{cc}
s_1=XZIIZZIIII & \,\,\,\,\,s_6=ZIIIIXZIIZ \\
s_2=ZXZIIIZIII & \,\,\,\,\,s_7=IZIIIZXZII \\
s_3=IZXZIIIZII & \,\,\,\,\,s_8=IIZIIIZXZI \\
s_4=IIZXZIIIZI & \,\,\,\,\,s_9=IIIZIIIZXZ \\
s_5=ZIIZXIIIIZ & \,\,\,\,\,s_{10}=IIIIZZIIZX \\
\end{array}$$
The associated classical codewords are
$$\begin{tabular}{cccc}
0000000000 & 1001100100 & 1001101111 & 0101100000\\
0000101001 & 1100101101 & 0111011011 & 0111010000\\
1011011111 & 1110010110 & 1100000100 & 1101111110\\
1111000101 & 0101101011 & 0001111010 & 0010010010\\
0010111011 & 1011010100 & 0011000001 & 1110111111
\end{tabular}$$

In Step~1 of Procedure~\ref{procedure1}, we have to find generators for ${N_S(W)}$. This is accomplished by finding a basis $(O)$ for the kernel of matrix $C$, described in Eq.~\eqref{matrixC}. In this example, this basis is given in Table~\ref{table10}. Then, the generators of ${N_S(W)}$ are Pauli observables ${\cal S}^{O_1},$ ${\cal S}^{O_2}$, ${\cal S}^{O_3}$, ${\cal S}^{O_4}$. In Step~2, they are measured one at a time. The results are displayed as signs $\pm$ on the top of subtables in Fig.~\ref{table30}. For example, if the results of measuring these Pauli observables are $+++ -$, only two Pauli errors were not detected, namely, $Y_2$ and $Z_1$. ${\mathcal{E}'}$ is $\{Y_2,Z_1\}$ in this case.

\begin{table}[h!]
  \centering
  \caption{Decoding observables (Pauli type --- ${\cal S}^{O_i}$) for the ((10,20,3)) code.}\label{table10}
\begin{tabular}{|c|c|c|c|} \hline
   $O_{1}$ & $O_{2}$ & $O_{3}$ & $O_{4}$ \\ \hline
0001110011 &  0010011001 & 0100111110 & 1000000100\\ \hline
\end{tabular}
\end{table}

In Step~3 of Procedure~\ref{procedure1}, we obtain the first non-Pauli observable, $A_1$ in Table~\ref{table20}, when ${\mathcal{E}'}=\{Y_2,Z_1\}$ . In this case, Step~3(a) is used only one time, because observable $A_1$ distinguishes all errors in ${\mathcal{E}'}$. Note that we can verify whether $A_1$ can be used for others ${\mathcal{E}'}$. In this example, $A_1$ can be used 4 times, as can be seen in Fig.~\ref{table30}. The next set will be ${\mathcal{E}'}=\{X_4,Z_3\}$.

\begin{table}[h!]
  \centering
  \caption{Decoding observables (non-Pauli) for the ((10,20,3)) code. They are type-4 observables described by $A={\cal S}^{V}{\cal S}^{(V_1,V_2)}$ (see Eq.~(\ref{eqAtype4})).}\label{table20}
\begin{tabular}{|c|c|c|c|}
\hline
        & $V$           &   $V_1$        &   $V_2$ \\\hline
$A_1$   & 0000111001   &  0000100001 & 0001000011 \\\hline
$A_2$   & 0000111001   &  0000100010 & 0001000000  \\\hline
$A_3$   & 0000010001   &  0000000011 & 0000010010  \\\hline
$A_4$   & 0000110000   &  0000011000 & 0000100010  \\\hline
$A_5$   & 0000111001   &  0000011011 & 0000101011  \\\hline
$A_6$   & 0000111001   &  0000011000 & 0001000000  \\\hline
$A_7$   & 0000111001   &  0000110000 & 0010000010  \\\hline
\end{tabular}
\end{table}

At the end, we obtain seven type-4 decoding observables, which are listed in Table~\ref{table20}. The form of those observables is given by ${\cal S}^{V}{\cal S}^{(V_1,V_2)}$, which is described in Eq.~(\ref{eqAtype4}). To decide which observable must be measured, we have to analyze Fig.~\ref{table30}. Note that it is enough to measure one non-Pauli observable for this code. We have not put the result ++++ in the list of subtables, because it is trivial --- only the identity operator appears in this case.

\begin{figure}[h!]
  \centering
  \caption{ {Results of the measurements of the decoding observables. The signs on the top of each subtable describe the results of measuring Pauli observables of Table~\ref{table10}. The measurement of non-Pauli observables is conditioned by the results of measuring Pauli observables.}}\label{table30}

\
{
\begin{center}
$\begin{array}{|c|cc|}
 \multicolumn{3}{c}{+++-} \\\hline
    & Y_2 & Z_1\\\hline
A_1 & -   &  +  \\\hline
\end{array}$\,\,\,\,\,
$\begin{array}{|c|cc|}
 \multicolumn{3}{c}{++-+} \\\hline
    & Y_{10} & Z_2\\\hline
A_1 & -   &  +  \\\hline
\end{array}$\,\,\,\,\,
$\begin{array}{|c|cc|}
 \multicolumn{3}{c}{++--} \\\hline
    & X_2 & Z_8\\\hline
A_1 & -   &  +  \\\hline
\end{array}$\,\,\,\,\,
$\begin{array}{|c|cc|}
 \multicolumn{3}{c}{----} \\\hline
    & X_7 & Y_5\\\hline
A_1 & +   &  -  \\\hline
\end{array}$\,\,\,\,\,
$\begin{array}{|c|cc|}
 \multicolumn{3}{c}{+-++} \\\hline
    & X_4 & Z_3\\\hline
A_2 & -   &  +  \\\hline
\end{array}$

\vspace{0.2cm}

$\begin{array}{|c|cc|}
 \multicolumn{3}{c}{---+} \\\hline
    & X_{10} & Z_6\\\hline
A_2 & +   &  -  \\\hline
\end{array}$\,\,\,\,\,
$\begin{array}{|c|cc|}
 \multicolumn{3}{c}{-++-} \\\hline
    & X_3 & Y_7\\\hline
A_3 & +   &  -  \\\hline
\end{array}$\,\,\,\,\,
$\begin{array}{|c|cc|}
 \multicolumn{3}{c}{--+-} \\\hline
    & Y_9 & Y_3\\\hline
A_3 & -   &  +  \\\hline
\end{array}$\,\,\,\,\,
$\begin{array}{|c|cc|}
 \multicolumn{3}{c}{+-+-} \\\hline
    & X_5 & Y_6\\\hline
A_4 & +   &  -  \\\hline
\end{array}$\,\,\,\,\,
$\begin{array}{|c|cc|}
 \multicolumn{3}{c}{--++} \\\hline
    & Z_{10} & Y_4\\\hline
A_4 & +   &  -  \\\hline
\end{array}$

\vspace{0.2cm}

$\begin{array}{|c|cc|}
 \multicolumn{3}{c}{-+++} \\\hline
    & X_8 & Z_4\\\hline
A_5 & -   &  +  \\\hline
\end{array}$\,\,\,\,\,
$\begin{array}{|c|cc|}
 \multicolumn{3}{c}{-+--} \\\hline
    & X_6 & Y_8\\\hline
A_5 & +   &  -  \\\hline
\end{array}$\,\,\,\,\,
$\begin{array}{|c|cc|}
 \multicolumn{3}{c}{-+-+} \\\hline
    & Z_9 & Z_5\\\hline
A_6 & +   &  -  \\\hline
\end{array}$\,\,\,\,\,
$\begin{array}{|c|cc|}
 \multicolumn{3}{c}{+--+} \\\hline
    & X_1 & Z_7\\\hline
A_7 & +   &  -  \\\hline
\end{array}$\,\,\,\,\,
$\begin{array}{|c|cc|}
 \multicolumn{3}{c}{+---} \\\hline
    & X_9 & Y_1\\\hline
A_7 & -   &  +  \\\hline
\end{array}$
\end{center}
}
\end{figure}

\section{Conclusions}

In this work, we have established two results on the existence and form of type-4 decoding observables for CWS codes, namely, Theo.~\ref{Principal} and Corollary~\ref{corollary1}. Those non-Pauli observables are necessary in non-stabilizer CWS codes. We have described a procedure to obtain those observables, which has better chances to succeed when the CWS code is close to a stabilizer code. The standard procedure is to start measuring a list of Pauli observables that stabilizes the code. In the next step, we search for type-4 decoding observables in the search space described by Corollary~\ref{corollary1}.

The procedure does not succeed for all CWS codes, and it is interesting to understand why it fails for some of them. In those cases, is it possible to use type-$i$ observables, with $i>4$ as decoding observables? For example, the $((10,18,3))$ code described in Ref.~\cite{Cross2009} cannot be decoded by type-4 observables.

It is also interesting to study methods, perhaps in family of codes, to obtain the non-Pauli observables in a straightforward way, with less exhaustive search by reducing the search space and to compare with the general method proposed in Ref.~\cite{Li2010}.

\section*{Acknowledgements}

We acknowledge CNPq's financial support

\bibliographystyle{ieeetr}
\bibliography{quantum-codes}

\end{document}